\documentclass[a4paper,USenglish,modulo,thm-restate]{lipics-v2019}

\usepackage{xspace}

\usepackage[ruled,vlined,noend]{algorithm2e}

\usepackage{placeins}

\newcommand{\Frechet}[0]{Fr\'echet\xspace}
\newcommand{\Seq}[1]{\langle #1 \rangle}
\newcommand{\Set}[1]{\lbrace #1 \rbrace}
\newcommand{\TrajSet}{\mathcal{T}}
\newcommand{\RoadNet}{\mathcal{G}}

\newcommand{\Reals}{\mathbb{R}}
\newcommand{\Solution}{\mathcal{P}}

\newcommand{\etal}{\textit{et al.}\xspace}

\bibliographystyle{plainurl}

\title{Route Reconstruction from Traffic Flow\\  via Representative Trajectories} 

\titlerunning{Route Reconstruction from Traffic Flow via Representative Trajectories}

\author{Bram Custers}{TU Eindhoven, the Netherlands }{b.a.custers@tue.nl}{https://orcid.org/0000-0001-9342-319X}{Supported by HERE Technologies and the Dutch Research Council (NWO); 628.011.005}
\author{Wouter Meulemans}
{TU Eindhoven, the Netherlands }{w.meulemans@tue.nl}{https://orcid.org/0000-0002-4978-3400}{}
\author{Bettina Speckmann}{TU Eindhoven, the Netherlands }{b.speckmann@tue.nl}{https://orcid.org/0000-0002-8514-7858}{Supported by the Dutch Research Council (NWO); 639.023.208.}
\author{Kevin Verbeek}{TU Eindhoven, the Netherlands }{k.a.b.verbeek@tue.nl}{https://orcid.org/0000-0003-3052-4844}{}

\authorrunning{B. Custers, W. Meulemans, B. Speckmann, and K. Verbeek} 

\Copyright{Bram Custers, Wouter Meulemans, Bettina Speckmann, and Kevin Verbeek} 

\ccsdesc[500]{Theory of computation~Computational geometry}
\ccsdesc[300]{Theory of computation~Network flows}

\keywords{Trajectories, loop-detector data, Fr\'echet distance, data fusion, flow reconstruction}

\category{} 

\relatedversion{} 

\supplement{Implementation of our algorithms is available at \href{https://github.com/tue-alga/RouteReconstruction}{https://github.com\discretionary{/}{}{/}tue-alga\discretionary{/}{}{/}RouteReconstruction}}


\acknowledgements{The authors would like to thank HERE Technologies for providing the HR dataset.}

\nolinenumbers 

\hideLIPIcs  

\EventEditors{John Q. Open and Joan R. Access}
\EventNoEds{2}
\EventLongTitle{Symposium on Computational Geometry (SOCG 2021)}
\EventShortTitle{SOCG 2021}
\EventAcronym{SOCG}
\EventYear{2021}
\EventDate{December 24--27, 2021}
\EventLocation{Buffalo, New York, United States of America}
\EventLogo{}
\SeriesVolume{42}
\ArticleNo{23}

\begin{document}

\maketitle

\begin{abstract}
Understanding human mobility patterns is an important aspect of traffic analysis and urban planning. 
Trajectory data provide detailed views on specific routes, but typically do not capture all traffic. 
On the other hand, loop detectors built into the road network capture all traffic flow at specific locations, but provide no information on the individual routes. Given a set of loop-detector measurements as well as a (small) set of representative trajectories, our goal is to investigate how one can effectively combine these two partial data sources to create a more complete picture of the underlying mobility patterns. Specifically, we want to reconstruct a realistic set of routes from the loop-detector data, using the given trajectories as representatives of typical behavior. 

We model the loop-detector data as a network flow that needs to be covered by the reconstructed routes and we capture the realism of the routes via the strong \Frechet distance to the representative trajectories. We prove that several forms of the resulting algorithmic problem are NP-hard. Hence we explore heuristic approaches which decompose the flow well while following the representative trajectories to varying degrees. First of all, we propose an iterative \emph{\Frechet Routes} (FR) heuristic which generates candidates routes which have bounded \Frechet distance to the representative trajectories. 
Second we describe a variant of multi-commodity min-cost flow (MCMCF) which is only loosely coupled to the trajectories. Lastly we also consider global min-cost flow (GMCF) which is essentially agnostic to the representative trajectories.

We perform an extensive experimental evaluation of our two proposed approaches in comparison to the min-cost flow baseline, both on synthetic and on real-world trajectory data. To create a ground truth for our experiments, we extract the flow information from map-matched trajectories. We find that GMCF explains the flow best, but produces a large number of routes (significantly more than the ground truth); these routes are often nonsensical. Also MCMCF produces a large number of routes which explain the flow reasonably well, however, the routes are mostly realistic. In contrast, FR produces significantly (orders of magnitude) smaller sets of realistic routes which still explain the flow well, albeit at the cost of a higher running time.
\end{abstract}

\newpage
\section{Introduction}
\label{sec:introduction}

Understanding human mobility patterns is an important aspect of traffic analysis and urban planning. To analyze mobility, we need to answer questions about various aspects of (vehicle) traffic, such as ``How busy is this road at different times of the day?'', ``Between which locations do people travel most frequently?'' or ``Which routes do people use to travel through the road network''? To answer such questions we can use various heterogeneous data sources, which generally fall into one of two categories: checkpoint data and tracking data.

Checkpoint data originate from measurements by static devices such as loop detectors or traffic cameras, placed on fixed locations throughout the road network. They provide a comprehensive view of the amount of traffic flow at that particular location, but inherently no information on how people navigate through the network. Tracking data, on the other hand -- predominantly captured through GPS in smart phones and navigation systems -- provide a detailed view of individual behavior in the form of trajectories. However, trajectory data does not describe the general traffic flow, as not all vehicles are tracked or tracked by the same system. Furthermore, the (often significant) detail in trajectories also raises privacy concerns, so trajectory data are frequently segmented and anonymized before analysis.

Both categories thus give only a partial view on the full dynamics of human mobility. Given their complementarity it is natural to investigate possibilities for data fusion, that is, data enrichment that leverages the strength of both. Given loop-detector measurements as well as a (small) set of representative trajectories, we investigate how one can effectively combine these two partial data sources to create a more complete picture of the underlying mobility patterns. Specifically, we want to reconstruct a realistic set of routes from the loop-detector data, using the given trajectories as representatives of typical behavior.

\subparagraph{Contributions and organization}
After a brief review of related work, we formally model our problem in Section~\ref{sec:modelling}, while also introducing the necessary concepts and notation. We arrive at a formal problem statement which models the loop-detector data as a time-independent network flow that needs to be covered by the reconstructed routes; we capture the realism of the routes via the strong \Frechet distance to the representative trajectories.
In Section~\ref{sec:hard} we prove that several forms of the resulting algorithmic problem are NP-hard even in restricted settings.
Hence we explore heuristic approaches which decompose the flow well while following the representative trajectories to varying degrees. In Section~\ref{sec:algo}, we propose an iterative \emph{\Frechet Routes} (FR) heuristic which generates candidates routes which have bounded \Frechet distance to the representative trajectories. 
In the same section we also describe a variant of multi-commodity min-cost flow (MCMCF) which is only loosely coupled to the trajectories. 

In Section~\ref{sec:experiments} we report on an  experimental evaluation of these proposed approaches in comparison to a global min-cost flow baseline
(GMCF) which is essentially agnostic to the representative trajectories. To make meaningful claims in terms of quality, we derive a ground truth by map matching real-world trajectories, and by generating synthetic routes in a real-world road network.
We find that GMCF explains the flow best, but produces a large number of often nonsensical routes (significantly more than the ground truth). 
MCMCF produces a large number of mostly realistic routes which explain the flow reasonably well. In contrast, FR produces significantly 
smaller sets of realistic routes which still explain the flow well, albeit at the cost of a higher running time. We note that our approaches 
are not restricted to loop-detector data.
We discuss avenues for further research in Section~\ref{sec:discussion}.

\subparagraph{Related work} 
Our problem is closely related to flow decomposition, where the goal is to decompose an aggregated flow into paths, optimizing a given objective function. Any flow can be decomposed into at most $O(E)$ paths and cycles, where $E$ is the number of edges in the graph~\cite{ahuja2017network}. However, given a set of paths that decompose the flow, it is NP-hard to determine the correct integral coefficients for each path~\cite{kloster2018practical}. Minimizing the number of paths in a decomposition is also NP-hard~\cite{vatinlen2008simple}, and thus various approximation algorithms have been developed~\cite{hartman2012split}.
We require that the reconstructed routes are similar to one of the representative trajectories, but we do not require that the flow is explained completely.  

We are aware of only a single geometric approach to reconstruct flow from checkpoint data. Duckham~\etal~\cite{duckham2016modeling} use the Earth Mover's distance to estimate the movement of couriers from checkpoint data. Given the limited data, the results are quite accurate, but a full reconstruction of the movement is clearly out of reach. 

Reconstructing a route (in a network) given a GPS trajectory is referred to as map matching \cite{alt2003matching,HaunertB-2012,jagadeesh2017online,WenkSP-2006}, see also the survey by Quddus~\etal~\cite{quddus2007current}. The goal is to find a route in a given network, accounting for potential misalignment between GPS measurements and the network, noise inherent in GPS systems, and inaccuracies in the road network.
There are a wide variety of available algorithms; particularly prominent are solutions based on hidden Markov models, a strategy which was pioneered by Newson and Krumm~\cite{newson2009hidden}.
Of specific relevance to our work is the result by Alt~\etal~\cite{alt2003matching} which solves map matching under the \Frechet distance (see Section~\ref{sec:algo}). Specifically, their algorithm decides in quadratic time whether a graph admits a path with
\Frechet distance at most $\varepsilon$ to an input trajectory.
Generally, map-matching techniques are not designed to explain flow data, but rather focus on correcting measurement errors in a single trajectory.
Moreover, if we insist on simple routes, map matching is NP-hard for various measures, including the \Frechet distance~\cite{loffler2017discretized} and Hausdorff distance~\cite{bouts2016mapping}. If the road network is a perfect grid, routes with bounded (but not necessarily minimal) \Frechet and Hausdorff distance can be found efficiently~\cite{bouts2016mapping}.

\section{Modeling}\label{sec:modelling}

Our input has three components: $(1)$ a road network, given as a graph $\RoadNet = (V,E)$; $(2)$ a set of representative trajectories $\TrajSet$, each encoded by a sequence $\langle p_1,\ldots,p_n \rangle$ of measurements; and $(3)$ loop-detector data, which are traffic-volume measurements expressed as the number of cars at a specific time at a specific location. Our goal is to combine these heterogeneous data sources to create a more complete picture of the underlying mobility patterns. Specifically, we want to reconstruct a realistic set of routes from the loop-detector data, using the trajectories as representatives of typical behavior. Here we discuss the modeling decisions we made to finally arrive at a formal problem statement which is amenable to algorithmic treatment. While doing so we also introduce our notation.

\subparagraph{Road network}
A road network is specified as a graph, $\RoadNet = (V,E)$ with $n$ vertices and $m$ edges. We assume that each vertex $v \in V$ has a position in $\Reals^2$ and that the edges are straight line segments between their endpoints. The graph is directed and thus bi-directional roads are represented by two separate edges which lie on top of each other. 

\subparagraph{Trajectories} 
A trajectory is a curve in space-time that represents a route taken by a vehicle. 
A trajectory $T$ is represented as a sequence $\langle p_1,\ldots,p_n \rangle$ of measurements $p_i$, where $p_i$ is a tuple $(x_i,y_i,t_i)$ containing the coordinates $(x_i,y_i) \in \Reals^2$ and the timestamp $t_i$ of the measurement. 
Trajectories obtained from vehicles are often \emph{map-matched} to a road network $\RoadNet = (V,E)$: the spatial components of a trajectory are matched to points on the road network, and a reasonable route in $\RoadNet$ is reconstructed for the trajectory (the definition of ``reasonable'' depends on the modeling decision made by the respective algorithm). The map-matched trajectory can then be expressed by a \emph{route} $P = \Seq{e_1,\ldots,e_k}$: a sequence of edges that encode the traversed path in the road network $\RoadNet$. We say that a route is \emph{simple} if it visits every vertex in $\RoadNet$ at most once.  
For ease of notation, we introduce the function $M(P,e)$ that indicates how often the edge $e \in E$ is traversed in the route $P$. Note that $M(P, e)$ is $0$ or $1$ if $P$ is simple (necessary but not sufficient), but may take on higher values if $P$ is not simple. 

\subsection{Modeling loop-detector data as time-independent complete flows}
Loop-detector data is gathered by counting the number of vehicles passing an induction loop in the road network. These counts are then aggregated over a fixed time interval. These time intervals generally range from minutes to hours and thus can give an accurate view on the traffic volume within a single day at that particular location in the road network.

Mathematically speaking, loop-detector data can be interpreted as an (incomplete) {\bf flow} on the road network: we assign the aggregate loop-detector data to the edge in the road network where the detector is located and interpret the data as the volume of traffic into and out of this edge, that is, the flow through this edge. As not every edge in a road network has a loop detector, the flow data is a priori incomplete. Reconstructing a complete flow is challenging due to the inherent uncertainty surrounding the exact driven routes between detector locations, see Castillo~\etal~\cite{castillo2008observability}. We focus on reconstructing routes from {\bf complete flow} information; reconstructing such a complete flow is beyond the scope of this paper.

Loop-detector data are inherently dependent on time and hence a priori imply a time-dependent flow. However, time-dependent flows pose several data, modeling, and complexity problems. 
First of all, we need additional data to model the time needed to traverse the network. We need to know the travel times for edges to be able to reason about realistic routes in the network for the flow. The realism of the reconstructed routes then heavily depends on the accuracy of these values and on time itself. Furthermore, time-dependent flows naturally require the use of time-dependent representative trajectories, in which case the set of representative trajectories will generally be (too) sparse. In addition, as we show in Section~\ref{sec:hard}, time-independent formulations of our problem are already computationally hard, which suggests that the time-dependent problem will be even harder to compute. Hence we model the loop-detector data via \textbf{time-independent} flow, which one can interpret as a ``long-time average'' of the loop-detector measurements.

We formalize the flow as follows: given a network $\RoadNet = (V,E)$, a \emph{flow} $(f,S,T)$ on $\RoadNet$ is specified by a flow function $f \colon E \rightarrow \Reals_{\ge 0}$ mapping each edge of $\RoadNet$ to its flow value. In addition, there are sources $S \subseteq V$ and sinks $T \subseteq V$ for the flow. Commonly, a flow function must satisfy the flow-conservation property: for each vertex that is neither a source nor a sink, the sum over all incoming flow is equal to the sum over all outgoing flow.
However, errors in actual data may cause violations of the flow-conservation property. Furthermore, traffic volume generally does not specify sources or sinks. We hence introduce the notion of a \emph{flow field} to describe the measured traffic volume, which is a function $\phi\colon E\rightarrow \mathbb{N}$ (not necessarily satisfying the flow-conservation property). 

\subsection{Reconstructed routes}
\label{ssec:reconstructedroutes}

Our goal is to compute a multiset $\overline{\Solution}$ of realistic routes which explains a flow field $\phi$ well. Recall that a route $P$ is a sequence of connected edges $\Seq{e_1,\ldots,e_l}$ in $\RoadNet$ and the multiplicity $M(P,e)$ is the number of occurrences of $e$ in $P$. We represent the multiset $\overline{\Solution}$ by a base set of routes $\Solution$ (a \emph{basis} $\Solution$ for short), along with associated frequency counts. It seems natural to assume that these counts should have integer values. However, with that restriction, even computing the correct counts for a specific base set of routes to explain a given flow is NP-hard~\cite{kloster2018practical}.  We therefore relax the counts $c \colon \Solution \rightarrow \Reals_{\ge 0}$ to be fractional \emph{coefficients}. This relaxation allows us to efficiently compute the coefficients for a specific base set and flow.

We say that the real-valued multiset of routes $\overline{\Solution}=(\Solution,c)$ is a \emph{reconstruction} of the flow field $\phi$. A reconstruction $(\Solution,c)$ defines
a flow $(f_{(\Solution,c)},S_{\Solution},T_{\Solution})$ as follows:
\begin{equation*}
    \forall e\in E: \qquad f_{(\Solution,c)}(e) = \sum_{P \in \Solution} M(P,e) c(P)
\end{equation*}
where $S_{\Solution}$ and $T_{\Solution}$ are the sets of start and end vertices of the routes in $\Solution$, respectively. We need to quantify how well the reconstruction $(\Solution,c)$, represented by the flow $(f_{(\Solution,c)},S_{\Solution},T_{\Solution})$, explains the input flow field $\phi$. The error in the loop-detector measurements can be positive or negative, suggesting a measure based on the absolute difference between $(f_{(\Solution,c)},S_{\Solution},T_{\Solution})$ and $\phi$ per edge. In line with traffic-analysis literature~\cite{cao2013bilevel,cascetta1984estimation}, we compute the \emph{flow deviation} $\Delta(\Solution,c,\phi)$ as the sum of squared differences between $(f_{(\Solution,c)},S_{\Solution},T_{\Solution})$ and $\phi$ over all edges.
\begin{equation*}
 \Delta(\Solution,c,\phi) = \sum_{e \in E} (\phi(e) - f_{(\Solution,c)}(e))^2
\end{equation*}
We note that for a fixed base set $\Solution$ of routes, the coefficients $c$ that minimize the flow deviation can be computed efficiently with standard techniques~\cite{bro1997fast,kim2010tackling}.

\subsection{Realistic routes}
\label{ssec:realism}

We are given a set of trajectories $\TrajSet$ that represent typical behavior of vehicles in the road network. Our aim is to compute a realistic reconstruction of the flow field based on $\TrajSet$. We measure the realism of a route in the basis $\Solution$ via its distance to the closest trajectory in $\TrajSet$.

There are many possible similarity measures for polylines, amongst which the \Frechet distance~\cite{alt1995computing}, the Hausdorff distance, and Dynamic Time Warping~\cite{berndt1994using}. In our setting we might encounter a large difference (either way) in spatial resolution between the road network and the representative trajectories, since the sampling densities of vehicle trajectories differ greatly between providers and sampling technologies used. In the presence of such large discrepancy in sampling density, discrete measures such as Dynamic Time Warping and the discrete versions of the \Frechet distance and Hausdorff distance are known to perform poorly, since measurements have to be matched to vertices of the road network.

The Hausdorff distance and the weak \Frechet distance do not capture the order of points and edges in trajectories and are hence less suitable for our purpose. Hence we choose the strong \Frechet distance to measure the realism of our reconstructed routes: it naturally captures the variability in the paths while encouraging that the general direction of reconstructed routes and representative trajectories are similar.

The (strong) \emph{\Frechet distance} $d_F(P,Q)$ between two curves $P, Q\colon[0, 1] \rightarrow \Reals^2$ is defined as
\begin{equation*}
d_F(P,Q) = \inf_{\alpha,\beta} \sup_{t\in[0,1]} ||P(\alpha(t)) - Q(\beta(t))||,
\end{equation*}
where $\alpha$ and $\beta$ are reparameterizations of $P$ and $Q$, respectively.
The functions $\alpha,\beta$ must be strictly monotonically increasing, with $\alpha(0) = \beta(0) = 0$ and $\alpha(1) = \beta(1) = 1$. 
To determine if two curves lie at \Frechet distance at most $\varepsilon$, we use the so-called 
\emph{free-space diagram}. This diagram represents matching locations on curves $P$ and $Q$ that are within $\varepsilon$ Euclidean distance of each other. Two curves have \Frechet distance at most $\varepsilon$ if a strictly monotone path can be found through this diagram. This problem can be solved in $O(n^2)$ time where $n$ is the total complexity of the two curves \cite{alt1995computing}. 

We say that a route is \textbf{realistic} if it lies within a prespecified \Frechet distance $\varepsilon$ of the closest representative trajectory. The parameter $\varepsilon$ controls the realism of our reconstruction. We say that a reconstruction $(\Solution,c)$ is realistic if all routes $P \in \Solution$ are realistic.

The routes taken by vehicles tend to be simple, since humans generally take shortest paths to their destinations. Hence we would prefer to reconstruct simple routes only. However, as we show in Section~\ref{sec:hard}, even just minimizing the flow deviation is NP-hard for simple paths. Our heuristic approaches hence prefer simple routes but do not exclude non-simple ones.

\subsection{Formal problem statement}\label{subsec:formal}
Our complete input is the road network $\RoadNet = (V,E)$, the set of representative trajectories $\TrajSet$, the flow field $\phi$ induced by the loop-detector data, and realism parameter $\varepsilon > 0$.
We want to find a realistic reconstruction $(\Solution,c)$ such that the flow deviation $\Delta(\Solution,c,\phi)$ is minimized.

\section{Computational complexity}
\label{sec:hard}

In this section we explore the computational complexity of our problem. First of all, we restrict the reconstructed routes to be simple. In this setting, even computing just a single realistic route is NP-hard~\cite{loffler2017discretized}. By extension, computing a realistic simple reconstruction is hard as well. Hence, we next consider a restricted variant of the problem, where the reconstructed routes do not have to be realistic and even share start and end point. Specifically, we require that all routes in the reconstruction are simple and start at a vertex $s$ and end at a vertex $t$. We refer to such a reconstruction as an $(s,t)$-reconstruction. However, we show that even this simplified problem is NP-hard. For this result, we consider two variants of the deviation function: the sum of squared differences as defined in the previous section, but also the sum of absolute differences. In the following two theorems we refer to these as \emph{squared} and \emph{absolute} deviation, respectively (proofs can be found in Appendix~\ref{app:hard}).
\begin{restatable}{theorem}{absdeviation}
\label{thm:abshard}
Let $\RoadNet$ be a road network with source $s$ and sink $t$, and let $\phi$ be an associated flow field. It is NP-hard to compute an $(s,t)$-reconstruction with only simple paths that minimizes the absolute deviation to $\phi$.
\end{restatable}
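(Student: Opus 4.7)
The plan is to reduce from a classical NP-hard problem such as \textsc{Partition} or \textsc{3-SAT}. The key observation enabling the reduction is that the cone of non-negative combinations of simple $(s,t)$-path indicator vectors is strictly contained in the cone of all $(s,t)$-flows: the vertex-non-repetition requirement forbids certain internal circulations that a general flow would allow, even though the coefficients $c$ are unrestricted fractional reals. This gap is what provides a combinatorial foothold; without the simplicity constraint, the problem reduces to an $\ell_1$ regression over the flow polytope and is in P.

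Concretely, I would build a road network consisting of a chain of small ``choice'' gadgets between $s$ and $t$, one per variable or item in the source instance. Each gadget offers two (or three) parallel branches between a local entry and exit vertex, so that a simple path is forced to commit to exactly one of them. To prevent fractional coefficients from trivially resolving each gadget independently, I would interleave the gadgets with shared auxiliary vertices or ``tally'' edges that simple paths traversing particular combinations of branches must pass through. In this way the choices across gadgets become coupled, because a path visiting branch $L_i$ and later branch $R_j$ may have to reuse an auxiliary vertex already visited if it also commits to another specific branch elsewhere. The flow field $\phi$ is then set so that matching $\phi$ edge-by-edge amounts to selecting a globally consistent set of branches that encodes a valid partition (resp.\ a satisfying assignment).

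The main obstacle is the analytic step showing that fractional coefficients cannot bypass this encoded combinatorial choice. That is, I need to argue that if the source instance has no solution, then every conic combination of simple $(s,t)$-path indicator vectors differs from $\phi$ on at least one edge by a strictly positive amount that cannot be cancelled by averaging more paths. This is the delicate part: it hinges on the geometry of the simple-path cone together with the coupling edges, which force any ``hybrid'' conic combination of incompatible branch choices to overshoot $\phi$ on at least one shared edge, contributing a positive residue to the absolute deviation. A complementary direction---constructing, from a valid assignment, a set of simple $(s,t)$-paths with coefficients realizing $\phi$ exactly---should follow directly by taking one path per selected branch combination with unit coefficients.

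Once these two directions are established, the construction is clearly polynomial in the size of the source instance, and the equivalence ``the minimum absolute deviation equals the chosen threshold (zero, in the cleanest version) iff the source instance is a YES instance'' proves the NP-hardness of minimizing the absolute deviation over $(s,t)$-reconstructions restricted to simple paths, as stated in Theorem~\ref{thm:abshard}.
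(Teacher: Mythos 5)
Your proposal is a plan rather than a proof, and the step you yourself flag as ``the main obstacle'' is precisely the part that is missing: you never specify the gadgets, the coupling edges, or the flow field concretely, and you give no argument that fractional conic combinations of simple $(s,t)$-paths cannot reproduce $\phi$ (or come within the threshold) when the source instance is a NO instance. This is not a minor omission. Because the coefficients $c$ are arbitrary non-negative reals and the deviation is measured edge-by-edge, mixtures of mutually ``incompatible'' branch choices can very easily produce the same edge-flow vector as a consistent choice would; ruling this out is exactly where the difficulty of such a reduction lives, and a vague appeal to ``shared auxiliary vertices'' does not resolve it --- in fact designing a chain of choice gadgets in which simplicity (vertex non-repetition) genuinely couples choices across gadgets, while still admitting the YES-direction paths, is the whole construction. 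As it stands, neither direction of the claimed equivalence is established, so the reduction is not complete.

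For comparison, the paper's proof takes a different and much lighter route: a reduction from the longest-path problem. It attaches to $s'$ a new path of $L-1$ edges from a fresh source $s$, sets $\phi=1$ on the original edges $E'$ and $\phi=0$ on the new edges, and asks whether the absolute deviation can be at most $|E'|-1$. The point is that every route in an $(s,t)$-reconstruction must traverse the zero-flow prefix, incurring a penalty of $(L-1)\,c(P)$, while it can decrease the deviation by at most $c(P)$ on each edge of $E'$ it covers; hence (after discarding removable routes) any route that is actually useful must contain at least $L$ edges of $\RoadNet'$, i.e.\ it witnesses a long simple path, and conversely a long simple path with coefficient $1$ achieves deviation $(L-1)+(|E'|-L)=|E'|-1$. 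This argument never needs to control what arbitrary fractional mixtures can represent --- the per-path penalty/benefit accounting sidesteps exactly the analytic step your proposal leaves open. If you want to salvage your approach, you would need to make the coupling construction explicit and prove a quantitative lower bound on the residual deviation of every conic combination in the NO case; alternatively, adopt a penalty structure in $\phi$ (as the paper does) so that the argument becomes local to each path rather than global over the whole mixture.
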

\begin{restatable}{theorem}{sqrhard}
\label{thm:sqrhard}
Let $\RoadNet$ be a road network with source $s$ and sink $t$, and let $\phi$ be an associated flow field. It is NP-hard to compute an $(s,t)$-reconstruction with only simple paths that minimizes the squared deviation to $\phi$. 
\end{restatable}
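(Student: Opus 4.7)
The plan is to model the proof on that of Theorem~\ref{thm:abshard} and exploit that both deviation measures vanish precisely when $f_{(\Solution,c)}(e) = \phi(e)$ for every edge $e$. If the absolute-deviation reduction already has the ``gap structure'' that yes-instances admit a simple-path $(s,t)$-reconstruction with zero deviation while no-instances have strictly positive absolute error on at least one edge, then the same construction immediately establishes Theorem~\ref{thm:sqrhard}: a single edge with positive $|\phi(e) - f_{(\Solution,c)}(e)|$ already contributes a positive $(\phi(e)-f_{(\Solution,c)}(e))^2$ term, so the squared deviation is also strictly positive.

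If the reduction for Theorem~\ref{thm:abshard} instead distinguishes between two positive values of the objective, I would reduce directly from \textsc{Partition}. Given positive integers $a_1,\dots,a_n$ summing to $2B$, I would build a directed road network with single source $s$ and sink $t$ containing $n$ element gadgets chained between two parallel ``corridor'' subpaths $L$ and $R$. Each element gadget $i$ offers the flow two routing options, ``go through $L$'' or ``go through $R$'', and the flow field $\phi$ imposes demand $a_i$ on the gadget's entry edge and demand $B$ on a witness edge inside each corridor. A simple $s$-$t$ path must commit to exactly one corridor end-to-end, so the reconstruction realizes $\phi$ exactly if and only if the integers split into two subsets of sum $B$.

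The main obstacle is that fractional coefficients are powerful enough to split a single element's contribution arbitrarily between the corridors unless simplicity is used to prevent this; otherwise a split of $a_i$ in ratio $B/(2B)$ would always be feasible and the reduction would collapse. My plan is therefore to make each corridor a long chain of edges through which \emph{every} simple $s$-$t$ path is forced to commit globally to either $L$ or $R$, e.g.\ by engineering the gadgets so that switching corridors necessarily reuses a vertex. The only remaining freedom is then the total coefficient mass on ``$L$-committed'' versus ``$R$-committed'' paths through each gadget, which plays the role of the subset-sum decision variable. The subtle part--and the step I expect to require the most care--is ruling out non-simple workarounds and ``averaging'' tricks: I would argue that with the corridor structure the minimum squared deviation is a non-negative quadratic in these discrete assignment variables whose zero set is exactly the set of valid partitions, and is bounded below by a positive constant (depending only on the input integers) in every no-instance. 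Hardness for squared deviation then follows by the same polynomial reduction.
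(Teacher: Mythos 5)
Your first branch does not apply, and your fallback reduction has a genuine gap. The paper's reduction for Theorem~\ref{thm:abshard} does not have a zero-versus-positive structure: its yes-instances have absolute deviation $|E'|-1>0$ (the padding edges from $s$ to $s'$ and the uncovered edges of the original graph always contribute), so you are exactly in the case where two positive objective values must be distinguished, and squaring per-edge errors does not transfer such a threshold argument. The core problem with your \textsc{Partition} construction is the one you flag but do not actually overcome: the coefficients are fractional and a reconstruction is a weighted multiset of routes, so simplicity constrains each route individually, not the mixture. Forcing every single simple $s$-$t$ path to commit globally to corridor $L$ or $R$ does not prevent the reconstruction from containing, for the same element gadget $i$, one $L$-committed simple path and one $R$-committed simple path, splitting the mass $a_i$ between them (say $a_i/2$ each). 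Then every instance, yes or no, admits a reconstruction in which both witness edges carry exactly $B$ and every entry edge carries $a_i$, i.e.\ the same minimum deviation; the ``assignment variables'' you want to treat as discrete are in fact continuous, and the zero set of your quadratic contains all fractional splits, not only valid partitions. No per-path gadget engineering can repair this, because the averaging happens across distinct routes of the basis rather than within one route.

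This is precisely why the paper does not reduce from a number-partitioning problem. Its proof reduces from longest path: it pads $\RoadNet'$ with a zero-flow approach path $E_1$ from $s$ to $s'$ and a flow-$2$ bypass path $E_2$ from $s'$ to $t'=t$, sets $\phi=1$ on the original edges, and shows (i) if $\RoadNet'$ has no simple path of length at least $L$, then the single route $E_1\cup E_2$ with coefficient $1$ is a global optimum with deviation exactly $|E|$, certified by computing the derivative of the convex objective with respect to the coefficient of an arbitrary added simple path and showing it is non-negative, so that this local minimum is global; and (ii) if a simple path of length at least $L$ exists, mixing it in with the small coefficient $\epsilon=1/(2L-1)$ strictly lowers the deviation below $|E|$. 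That convexity-plus-first-order-condition step is the ingredient your proposal is missing: with fractional coefficients the hardness must come from the exponentially many candidate simple paths, not from any integrality of how flow is split.
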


Consequently, we weaken the requirements even further and study relaxed $(s,t)$-recon\-struc\-tions which may contain non-simple routes. 
We sketch a simple algorithm which approaches a relaxed $(s,t)$-reconstruction with optimal flow deviation. First, we find the min-cost flow under the flow deviation~\cite{vegh2016strongly}. Then we construct a so-called pathflows-cycleflows decomposition of the resulting flow by employing a regular flow decomposition algorithm~\cite{ahuja2017network}. Now we can construct non-simple routes from the cycles by routing a shortest $(s,t)$-path via each cycle.
By sending a very small amount of flow along these routes and spinning around the cycles many times, we can construct non-simple route flows that come arbitrarily close to the cycle flows they were constructed from. Hence, they also come arbitrarily close to the optimal deviation. The resulting routes are highly non-simple and generally nonsensical.

We now return to the problem as stated in Section~\ref{subsec:formal}: we reintroduce the requirement that the reconstructed routes must be realistic, while still allowing non-simple routes and dropping the $(s,t)$-requirement. In this setting, minimizing the flow deviation becomes trivial if $\varepsilon$ is large enough: we can use single edges as routes in our reconstruction, thus trivially covering the entire flow field. 

For smaller values of $\varepsilon$, as of yet we cannot establish the computational complexity. We observe that, already for simple paths, the problem under the absolute and squared deviations is always convex; the difficulty stems from the exponential number of candidate paths to consider for the reconstruction. Furthermore, by Carath\'{e}odory's Theorem, we know that the optimal reconstruction contains at most $|E|$ paths. Thus, the difficulty of the problem lies in efficiently searching through the solution space for the best routes. Allowing non-simple routes grows the solution space, but may potentially make it possible to search the space more efficiently. However, we see no reason why this would be the case if we additionally require the routes to be realistic. We therefore conjecture that the problem is also NP-hard for non-simple realistic reconstructions.

\section{Route reconstruction algorithms}
\label{sec:algo}

In this section we present heuristic approaches which decompose the flow well while following the representative trajectories to varying degrees. In Section~\ref{subsec:FR} we describe our iterative \Frechet Routes (FR) heuristics which generates only realistic routes, that is, routes which have bounded \Frechet distance to the representative trajectories. We discuss two variants of this heuristic (WFR and EFR) which differ in their approach to selecting realistic routes. In Section~\ref{subsec:mincost} we scribe a  variant of multi-commodity min-cost flow (MCMCF) which is loosely coupled to the representative trajectories, as well as, the global min-cost flow (GMCF) baseline which is essentially agnostic to the representative trajectories.

\subsection{\Frechet Routes}\label{subsec:FR}

Recall that our goal is to find a realistic basis $\Solution$ and coefficients $c$ such that the flow deviation is minimized. Our \Frechet Routes (FR) heuristic decouples finding the basis and deciding on the coefficients $c$ for a given basis. We grow the basis iteratively, aiming to improve the deviation after each iteration (see Algorithm~\ref{alg:columngeneration}). Since the solution space is infinite, the main challenge is to find a basis that is small enough for efficient computation but comprehensive enough to result in a small deviation from the flow field. 

\begin{algorithm}[t]
\SetAlgoLined
\KwData{Road network $\RoadNet$, flow field $\phi$,  representative trajectories $\TrajSet$, threshold $\varepsilon$, and maximum number of iterations $i_\text{max}$}
\KwResult{Decomposition $(\Solution,c)$}
 Initialize residual flow field $\phi_r \leftarrow\ \phi$;
 basis $\Solution\leftarrow\ \emptyset$;
 coefficients $c\leftarrow\ \emptyset$\;
 \Repeat{$i_\text{max}$ iterations have been performed}{
    \For{$T \in \TrajSet$}{
        $\Solution \leftarrow\ \Solution \cup$ \textsc{GenerateBasisRoutes}($T$, $\phi_r$,$\RoadNet$,$\epsilon$)\;
    }
    $(\phi_r,c) \leftarrow\ $\textsc{ComputeCoefficients}($\Solution$,$\phi$)\;
    $(\Solution,c) \leftarrow\ $\textsc{Prune}($\Solution$,$c$)\;
 }
 \Return{$(\Solution,c)$}
 \caption{\textsc{FrechetRoutes}($\RoadNet$, $\phi$, $\TrajSet$, $\varepsilon$, $i_\text{max}$)}
 \label{alg:columngeneration}
\end{algorithm}

In one iteration of our heuristic we add new routes to the basis for each representative trajectory independently (\textsc{GenerateBasisRoutes}) and evaluate the resulting basis.
For a given basis we can compute the coefficients $c$ that minimize the flow deviation efficiently with standard techniques~\cite{bro1997fast,kim2010tackling}. We prune the basis by eliminating duplicate routes and routes with coefficient zero.
Furthermore, we compute the \emph{residual} flow field $\phi_r\colon E\rightarrow \Reals$, defined as $\phi_r(e) = \phi(e) - \sum_{P \in \Solution} M(P,e) c(P)$ for all edges $e\in E$. The residual flow field guides our search for new basis elements.

\subparagraph{Generating basis routes} Given the road network $\RoadNet$, a single representative trajectory $T$, the residual flow field $\phi_r$ (or the flow field $\phi$ in the first iteration), and threshold $\varepsilon$ on the \Frechet distance, we generate basis routes for $T$ as follows. The residual flow field stems from our current reconstruction $(\Solution,c)$. This reconstruction has an associated deviation $\Delta(\Solution,c,\phi)$ (this deviation is the flow within $\phi$ initially). If we extend the basis $\Solution$ with a path $P$ and an associated positive coefficient $c_P$, the deviation changes by: 
\begin{eqnarray*}
\Delta(\Solution \oplus P, c \oplus c_P,\phi) - \Delta(\Solution, c, \phi)
=&\sum_{e\in P} (\phi_r(e)-M(P,e)c_P)^2 - \phi_r(e)^2\\
=&-c_P\sum_{e\in P} M(P,e)(2 \phi_r(e) - M(P,e)c_P)
\end{eqnarray*}
For routes that visit an edge at most once (e.g.,  simple routes), the above simplifies to $-c_P \sum_{e \in P} (2 \phi_r(e) - c_P)$.
Negative values reduce deviation, so we are particularly interested in capturing edges in routes which have high residual flow $\sum_{e \in P} \phi_r(e)$.
Below we describe two different approaches to do so:  Edge-inclusion \Frechet Routes (EFR), which selects $k$ new routes per trajectory, and Weighted \Frechet Routes (WFR), which select one new route per trajectory. Both are adaptations of \Frechet map-matching as described by Alt~\etal~\cite{alt2003matching}.

\subparagraph{\Frechet map-matching}
Here we briefly sketch the algorithm by Alt~\etal, additional details can be found in Appendix~\ref{app:map-matching}. The input is a trajectory $T=\Seq{p_1,\ldots,p_{l}}$, a road network $\RoadNet$, and a threshold $\varepsilon$. The output is a route in $\RoadNet$ with \Frechet distance at most~$\varepsilon$ to $T$. 
The algorithm uses a \emph{free-space manifold}: a generalization of the free-space diagram that represents matching locations within distance $\varepsilon$ between $T$ and locations in $\RoadNet$. A monotone path in this manifold corresponds to a route in $\RoadNet$ that is within $\varepsilon$ \Frechet distance of $T$. We can determine the existence of such a path by applying a mixture of Dijkstra and a sweepline. Intervals of free space at vertices are \emph{white intervals}, and the algorithm determines their subintervals that are \emph{reachable} with a monotone path. The sweepline processes such reachable intervals in order of their lower endpoint, and follows edges $e$ to the target vertex via monotone paths to find new reachable intervals.

\subparagraph{Edge-inclusion \Frechet Routes (EFR)}
We modify the map-matching algorithm to find a route that must include a specific edge $e=(u,v)$ with high residual flow. 
In fact, we are attempting to find $k$ routes which include the $k$ edges with the highest residual flow and which each have positive residual flow.
To find a route within \Frechet distance $\varepsilon$ of the representative trajectory $T$ which contains edge $e$, we need to find two path in the free-space manifold $\mathcal{F}$: a path from the start to a white interval at $u$ and a path from a white interval at $v$ to the end. Moreover, the concatenation of these two paths with $e$ needs to be monotone in $\mathcal{F}$. To do so, we consider each white interval at $u$, decide if it is reachable from the start, and if so, continue from all possible white intervals at $v$.

EFR stops its search in the free-space manifold as soon as the begin/endpoint of the reconstructed route lies within $\varepsilon$ distance of the start/end of $T$. This might ignore flow on edges in the $\varepsilon$-vicinity of the start and end of $T$. Hence, we greedily add suitable edges to the ends of the route, taking care not to introduce cycles and to not decrease the average amount of residual flow per edge in the route.

\subparagraph{Weighted \Frechet Routes (WFR)}
EFR uses the residual flow only to indicate the top $k$ routes. We further modify the map-matching algorithm to find routes that generally include edges with high residual flow, that is, high weight. To do so, we maintain a sorted list of weights with each white interval. Let $\tau$ be the parameter of trajectory $T$ (intuitively, the height values of the free space manifold $\mathcal{F}$). Each entry in the list is a tuple $(\tau,\psi)$ such that there is a path through $\mathcal{F}$ with weight at least $\psi$ (the value of $\psi$ depends on the execution order of Dijkstra's algorithm and is hence only a lower bound). We prune tuples $(\tau,\psi)$ whenever there is another tuple $(\tau',\psi')$ where $\tau'<\tau$ and $\psi' \geq \psi$. We maintain the weights as we execute the map matching algorithm. In principle, we can construct a high-weight route at the end of the algorithm. However, note that the road network $\RoadNet$ can contain cycles which result in cycles between white intervals. Hence we construct a high-weight route explicitly via back-tracking, using each tuple at most once. Note that the resulting route may still contain cycles; we break only cycles in the dependency of white intervals.

\subsection{Min-cost flow}\label{subsec:mincost}

We now describe two heuristics that are based on min-cost flow and relax the strong realism constraint of \Frechet Routes. On a high level, both heuristics follow the same approach: first, we solve the min-cost flow problem for the flow field (guided by the representatives to a certain degree) and then we heuristically compute a reconstruction from the resulting flow. Note that our cost function is the (quadratic) deviation of the reconstructed flow from the flow field, which differs from the standard (linear) cost function for min-cost flows.

\subparagraph{Multi-commodity min-cost flow (MCMCF)}
For each representative trajectory $T $ we construct a subgraph $G(T)$ of $\RoadNet$ consisting of all vertices and edges within distance $\varepsilon$ of $T$. Vertices that are within distance $\varepsilon$ from the start or end of $T$ can act as sources or sinks of a flow in $G(T)$. Each representative trajectory hence induces a single (min-cost) flow problem. By overlapping the graphs $G(T)$ for all $T \in \TrajSet$, we essentially construct a \emph{multi-commodity min-cost flow} problem on $\RoadNet$, where each trajectory $T$ has an associated commodity. We can solve the resulting MCMCF using standard software packages (see Section~\ref{sec:experiments}).

\subparagraph{Global min-cost flow (GMCF)}
We retain only the sources and sinks of MCMCF and otherwise impose no restriction on the flow. This results in a min-cost flow problem over the entire road network $\RoadNet$, which is essentially agnostic to the representative trajectories.

\subparagraph{Heuristic path reconstruction}
The result of either min-cost flow approach is an edge flow per commodity or over the complete road network. Our goal is to approximate these flows via a reconstruction that may use non-simple paths.
For each commodity (or the complete network) we first compute a ``path flows-cycle flows'' decomposition~\cite{ahuja2017network}, which is equivalent to the edge flows.
We can directly add the resulting path flows to our basis. The cycle flows, however, generally are not correct source-sink paths. 
We observe that cycle flows which are disjoint from all path flows cannot be close to any of the representative trajectories; we hence exclude them from the basis. All other cycle flows we greedily merge with one of the path-flows which overlap it at one or more vertices. If the path flow is higher than the cycle flow, then we reduce it to be equal to the cycle flow. If the path flow is lower than the cycle flow, then we traverse the cycle multiple times, using the path flow, to create a consistent (non-simple) route, rounding where necessary.

\section{Experiments}
\label{sec:experiments}

In this section we evaluate and compare the various heuristics of the previous section. Particularly, we investigate the  extensions of our \Frechet{}-Routes methods, the effect of the parameters such as $\varepsilon$, and compare \Frechet Routes to the min-cost-flow-based methods.

We implemented all algorithms in C++ using Boost and MoveTK\footnote{\href{https://movetk.win.tue.nl}{https://movetk.win.tue.nl}}. For flow problems and determining coefficients $c$, we use IBM ILOG CPLEX 12.9. We ran all experiments single-threaded on Ubuntu 18.04, on an Intel(R) Xeon(R) Gold 5118 CPU @ 2.30GHz.

\begin{figure}[b]
    \centering
    \includegraphics[scale=1.0]{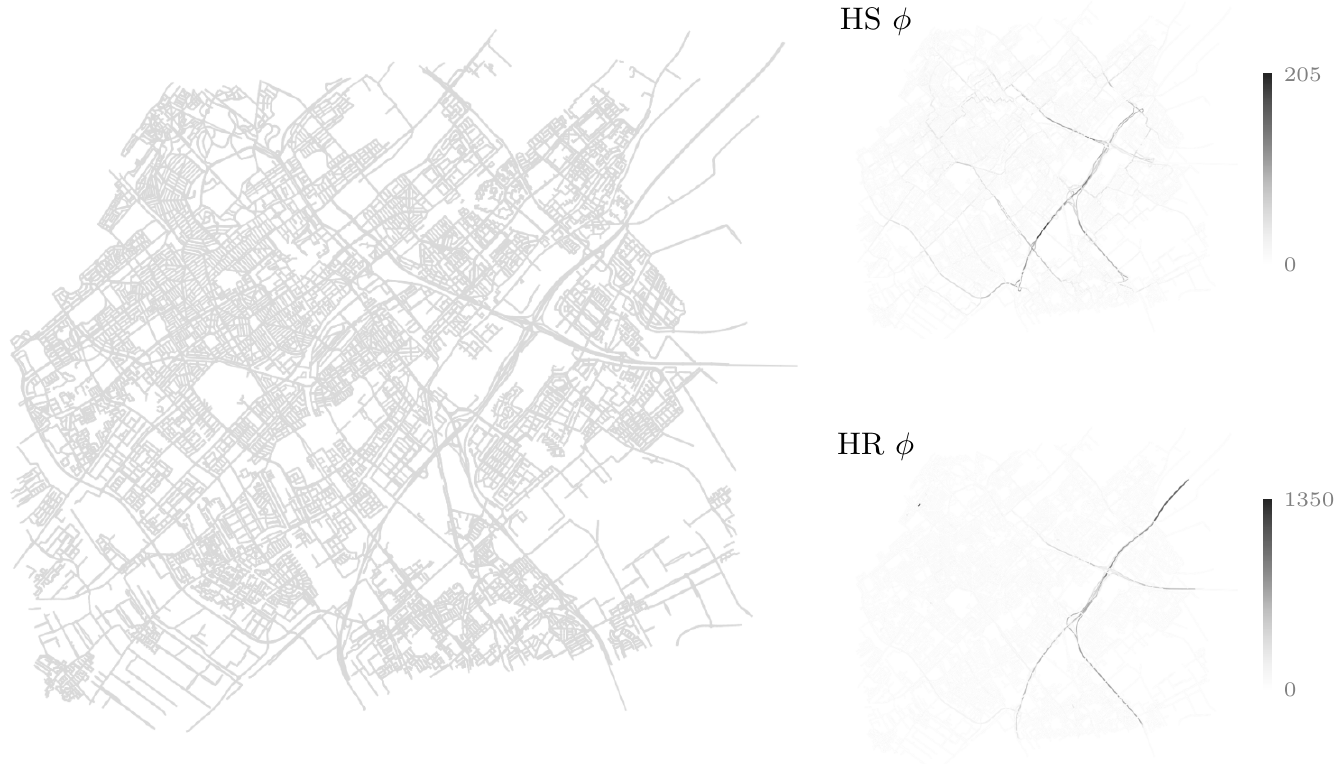}
    \caption{(Left) Road network of The Hague used in our experiments, $60\,277$ vertices and $100\,654$ edges. (Right) Input flow fields for HS and HR.}
    \label{fig:datanetwork}
\end{figure}

\subparagraph{Data}
To evaluate how well representative trajectories assist in route reconstruction beyond the given sample, we require a ground truth: all driven routes $\Solution^*$ that together define the flow field $\phi$. 
We use as network $\RoadNet$ the roads surrounding The Hague (the Netherlands) extracted from OpenStreetMap~\cite{OpenStreetMap}; see Fig.~\ref{fig:datanetwork}.
We create two datasets based on a complete trajectory set $\TrajSet^*$ from which we can derive ground truth $\Solution^*$:
\begin{description}
    \item[HS:] $\TrajSet^* = \Solution^*$ consists of $5\,000$ shortest paths between random locations in $\RoadNet$; variation is created by randomly perturbing every edge length with a value in the interval $[0,\gamma]$, for some parameter $\gamma \geq 0$, separately for each shortest-path computation; we use $\gamma=500m$. 
    \item[HR:] We use a set $\TrajSet^*$ of $11\,445$ real-world trajectories provided by HERE Technologies\footnote{\href{https://www.here.com}{https://www.here.com}} in the same area. We map-match $\TrajSet^*$ to $\RoadNet$ to obtain $\Solution^*$. To avoid bias, we use \cite{yang2018fast} instead of \cite{alt2003matching}, as the latter relies on the \Frechet distance and is the basis for our \Frechet Routes.
\end{description}
We derive the flow field $\phi$ for $\RoadNet$ by counting the number of occurrences of each edge in $\Solution^*$.
The representative trajectories $\TrajSet$ are sampled from $\TrajSet^*$, using $\alpha  > 0$ such that $|\TrajSet| = \lceil\alpha |\TrajSet^*|\rceil$.

\subparagraph{Measures}
Based on the rationale that we want to measure beyond the provided representatives, we quantify the performance of our heuristics via the following measures:
\begin{description}
    \item[flow deviation:] how well does the result represent the flow? This is measured as the sum of squared differences between input and reconstructed flow, over all edges (see Section~\ref{ssec:reconstructedroutes}).
    \item[realism:] how realistic are the routes? Following Section~\ref{ssec:realism}, we measure the average \Frechet distance from constructed route to its closest trajectory in $\TrajSet^*$, weighted by the coefficients.
    \item[coverage:] how well is all behavior captured by the result? This is measured as the average \Frechet distance from each route in ground truth $\Solution^*$ to the closest reconstructed route.
    \item[complexity:] the number of reconstructed routes.
    \item[running time:] the efficiency of the method,  measured as total computation time.
\end{description}

For all measures, lower values indicate better performance.
We note that finding a subset of $\TrajSet$ with optimal coverage is NP-hard, via a reduction from dominating set for unit-disk graphs \cite{masuyama1981computational}.
Contrasting HS, HR does not admit a solution with perfect coverage as the trajectories are not aligned to the road network. To indicate the distortion inherent in the data due to map matching, we visualize the distortion between $\TrajSet^*$ to $\Solution^*$ for HR in Fig.~\ref{fig:realdistortion}. 

\begin{figure}[h]
    \centering
    \includegraphics{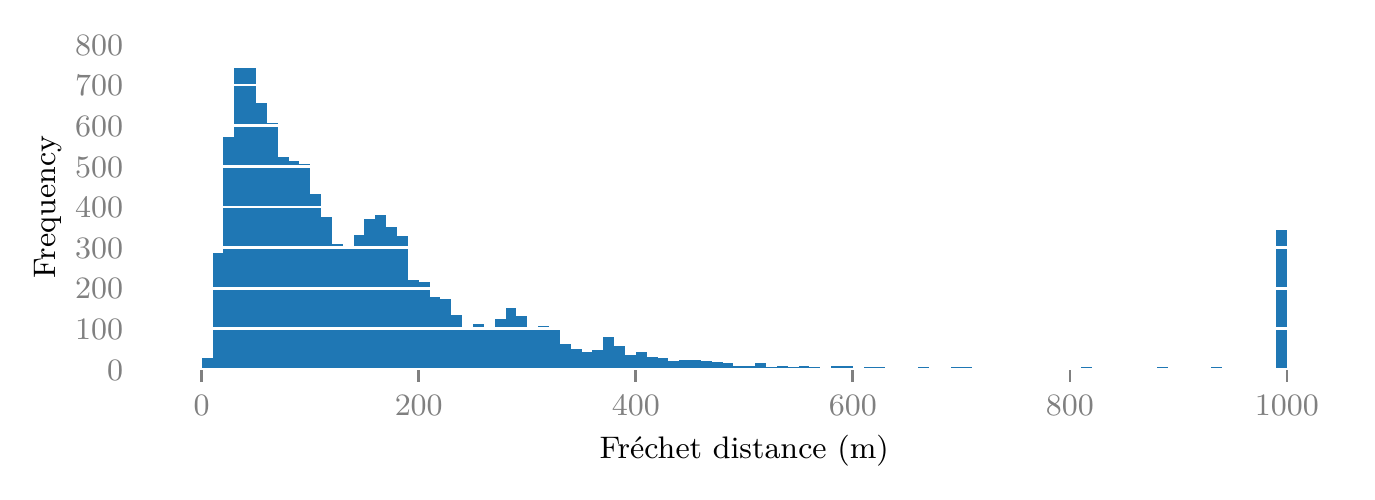}
    \caption{Histogram of \Frechet distances between all $T\in\TrajSet^*$ and their routes in $\Solution^*$ for HR. The rightmost bar aggregates higher values, which is approximately $3$\% of the trajectories.}
    \label{fig:realdistortion}
\end{figure}

\subsection{\Frechet Routes}
\label{sec:extensions}

Here we investigate our \Frechet{}-Routes algorithm, in terms of candidate-generation methods and its parameters. Throughout this section, we keep $\varepsilon$ fixed at $100m$ and run each trial with seven random samples and average the results.

\begin{figure}[p]
    \centering
    \includegraphics{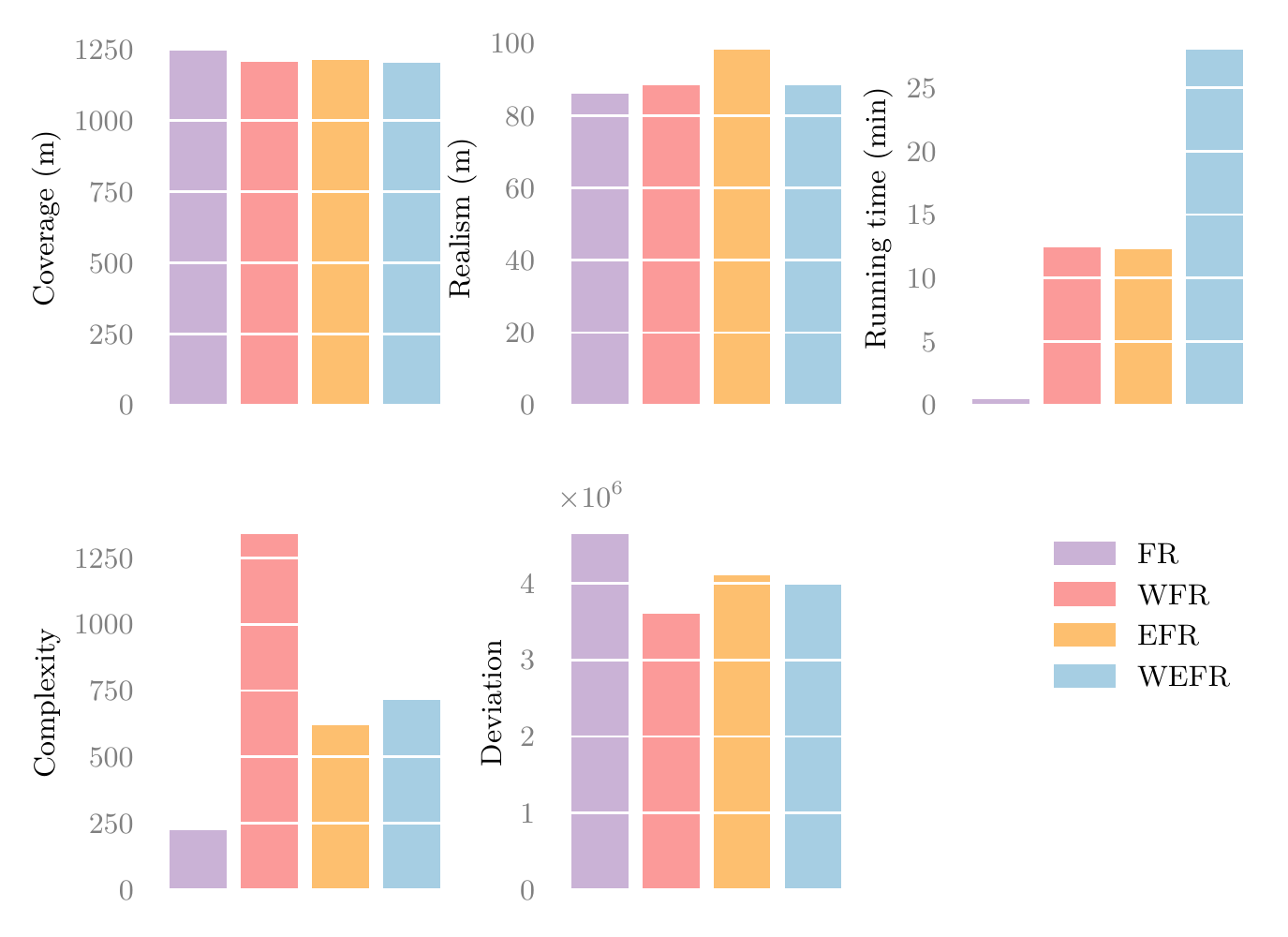}
    \hrule
    \centering
    \includegraphics{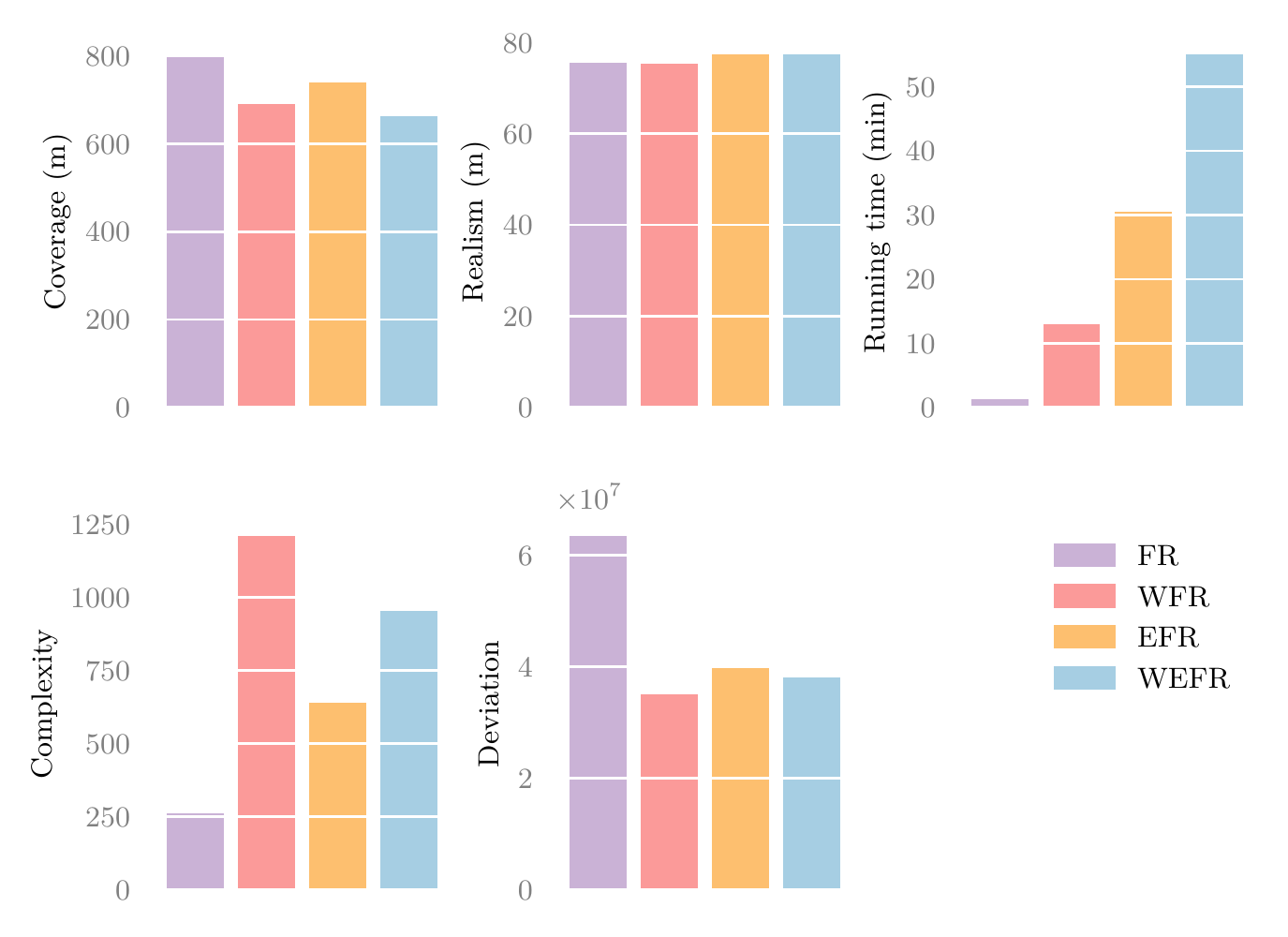}
   \caption{Results for HS (top) and HR (bottom) using variants of \Frechet Routes.}
    \label{fig:extensions-comparison}
\end{figure}

\subparagraph{Candidate generation}
We aim to investigate EFR and WEFR to generate basis routes.
Specifically, we run four variants: with weighted routes (WFR), with edge-inclusion (EFR), with both (WEFR) and without either extension (FR).
We use $\alpha=0.05$, $i_\text{max} = 8$ and $k =2$. 

Fig.~\ref{fig:extensions-comparison} summarizes the results for both datasets. We see fairly similar patterns between them. Compared to FR, the extensions have a mild positive effect on coverage and a strong positive effect deviation, but slightly deteriorate realism and increase complexity. Interestingly, EFR results in worse realism compared to WFR, but WFR causes a considerable increase in complexity. WEFR seems to combine the best of both, albeit at increased computation time. 

\begin{figure}[p]
    \centering
    \includegraphics{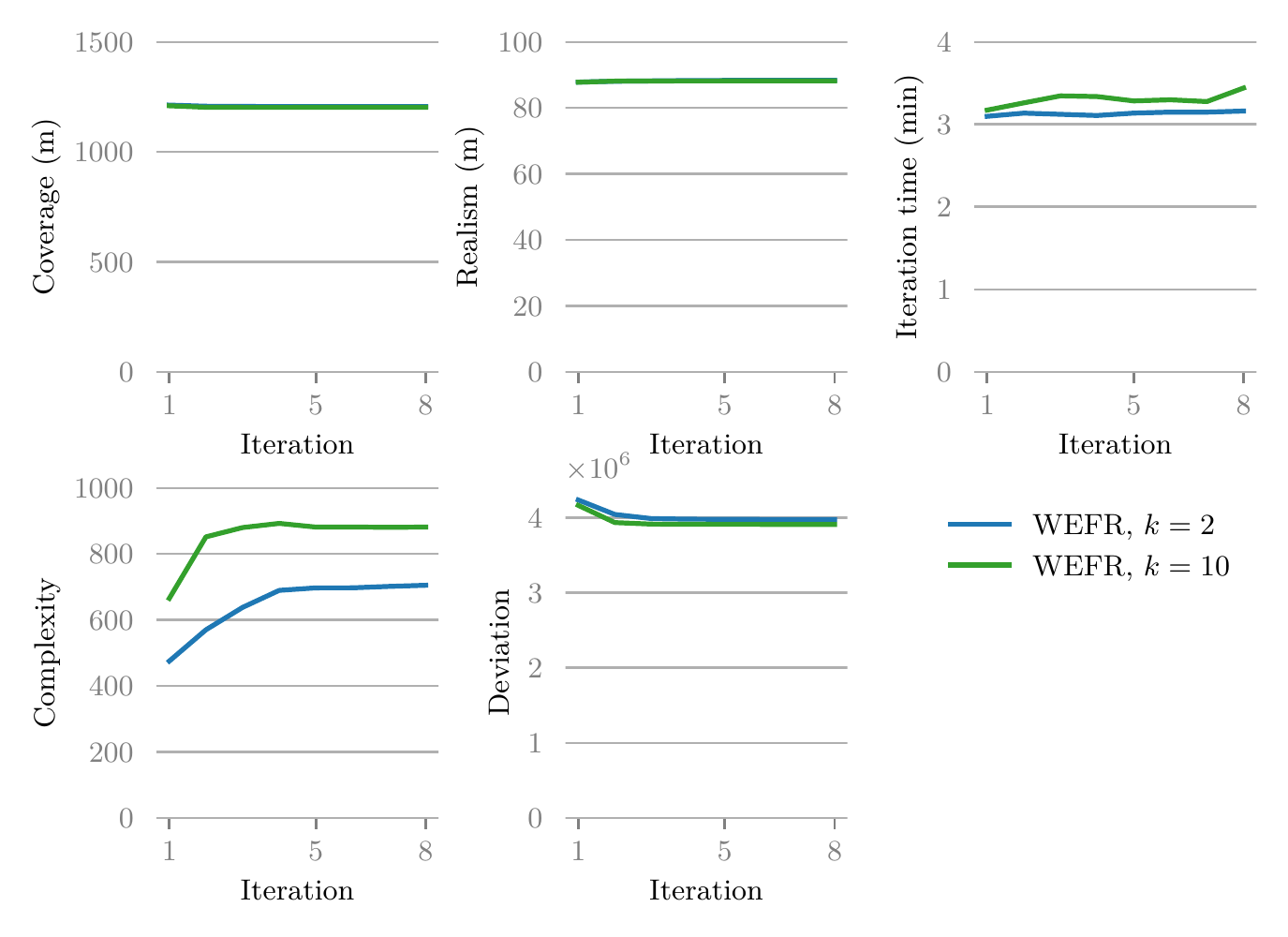}
    \hrule
    \centering
    \includegraphics{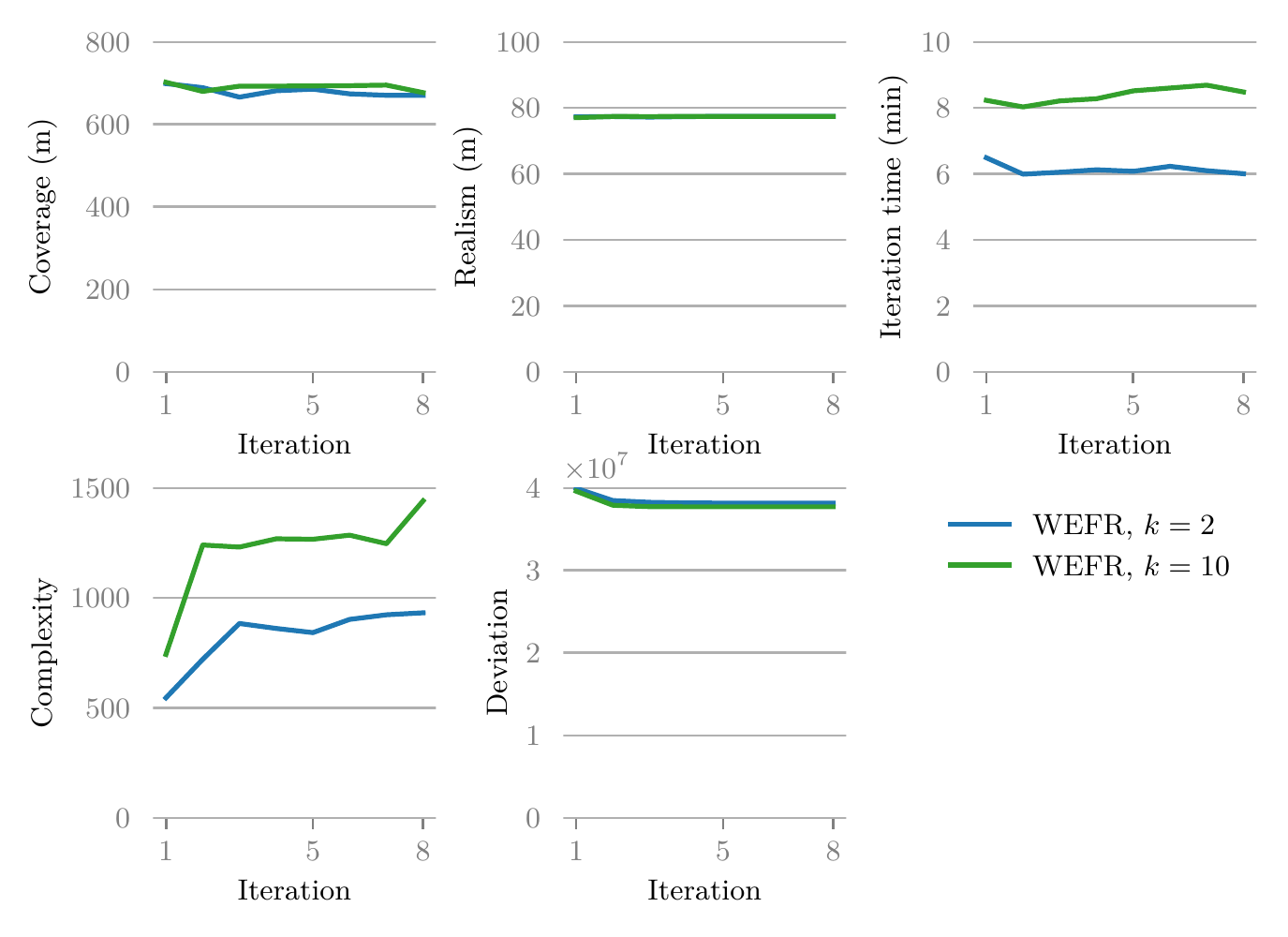}
    \caption{Results for WEFR per iteration, with $k=2$ and $k=10$ on HS (top) and HR (bottom).}
    \label{fig:extra-hittingpath-data}
\end{figure}

\subparagraph{Iterations and edge inclusion}
\label{sec:it-edge-incl}
Focusing now on WEFR, we investigate the effect of $i_\text{max}$, the number of iterations, and of $k$, the number of paths generated by edge inclusion. To do so, we run our algorithm with $k \in \{2,10\}$ on each dataset for $i_\text{max} = 8$ iterations and record the reconstruction after each iteration. As before, we use $\alpha=0.05$. 
Fig.~\ref{fig:extra-hittingpath-data} illustrates the results. For both datasets we see that the time spent per iteration remains roughly similar, even though the complexity tends to increase per iteration. Whereas realism remains mostly constant, iterating has a mildly positive effect on coverage and on deviation. After the first $5$ iterations this stabilizes, especially for deviation. Compared to $k = 2$, $k = 10$ yields higher complexity and running time, but improves deviation. The somewhat unstable coverage for HR is likely to be attributed to the use of map matching to obtain a ground truth.

\subparagraph{Sampling rate}
\label{sec:sampling-rate}
We now vary $\alpha$ using values in $\{ 0.05, 0.1, 0.15, 0.2, 0.25 \}$. This mimics different degrees of completeness of the representative trajectories relative to the overall traffic. 
Generally, we may expect that solution quality increases as we have more representative trajectories.
Based on the above, we use $k = 2$ but reduce $i_\text{max}$ to 5, since a larger sample implies that more candidates are generated per iteration.

\begin{figure}[t]
    \centering
    \includegraphics{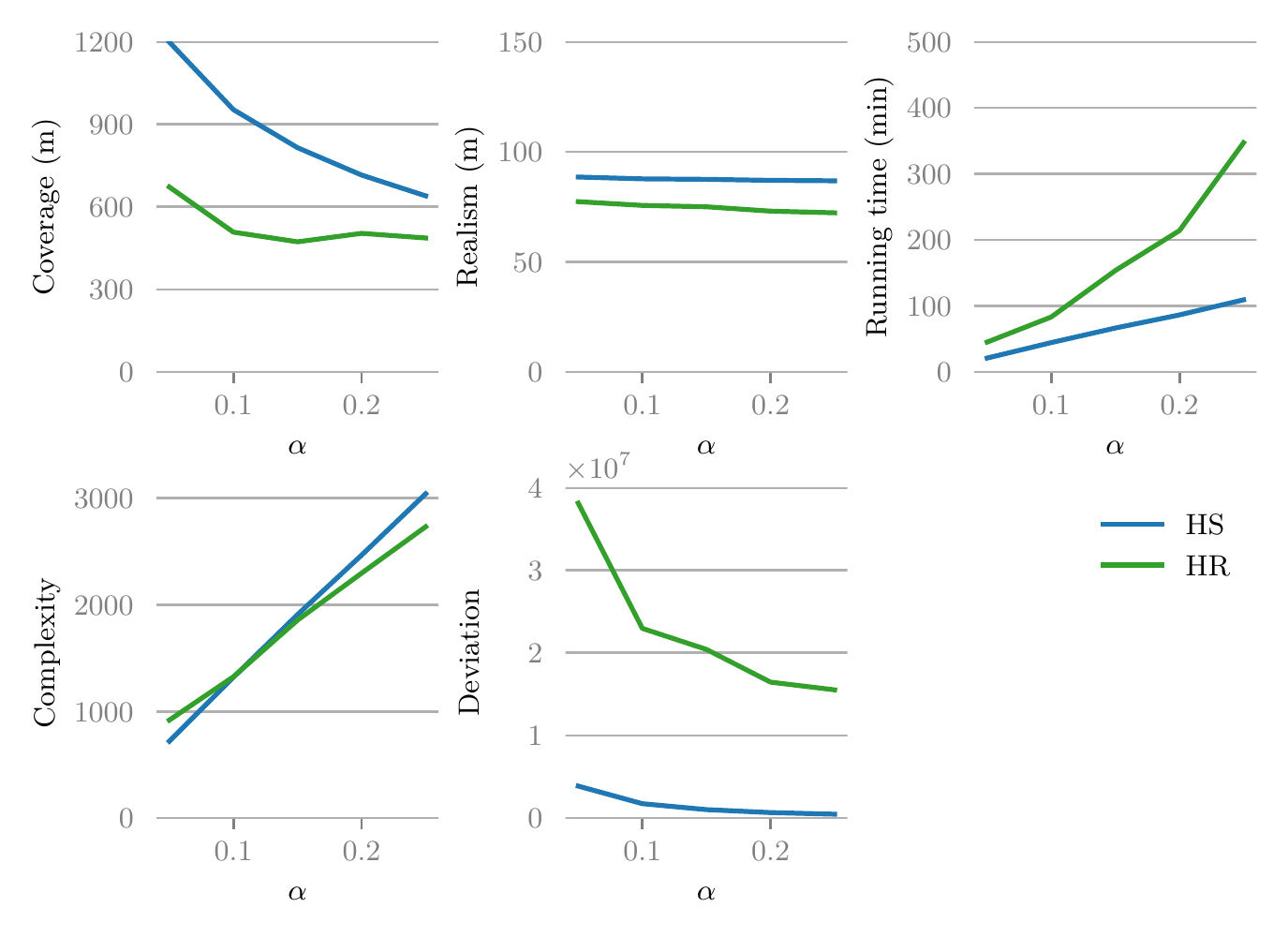}
    \caption{Results of WEFR for both datasets for varying sampling rate $\alpha$. }
    \label{fig:sampling-rate}
\end{figure}

The results are shown in Fig.~\ref{fig:sampling-rate}.
As expected, increasing sampling rate yields better deviation and coverage at the expense of longer computation times and higher complexity. Note that realism is not affected significantly, since it measures per reconstructed route. We see a stronger effect on coverage for HS and a stronger effect on deviation for HR. We attribute both to the nature of the data: whereas the synthetic data is spread over the network, the real data is focused on main roads (see Fig.~\ref{fig:datanetwork}). As such, additional trajectories are more likely to add a very different representative for synthetic data, but at the same time this requires a higher sample to be able to explain the resulting flow well.

\FloatBarrier
\subsection{Comparing \Frechet Routes to min-cost-flow methods}

\begin{figure}[b]
    \centering
    \includegraphics{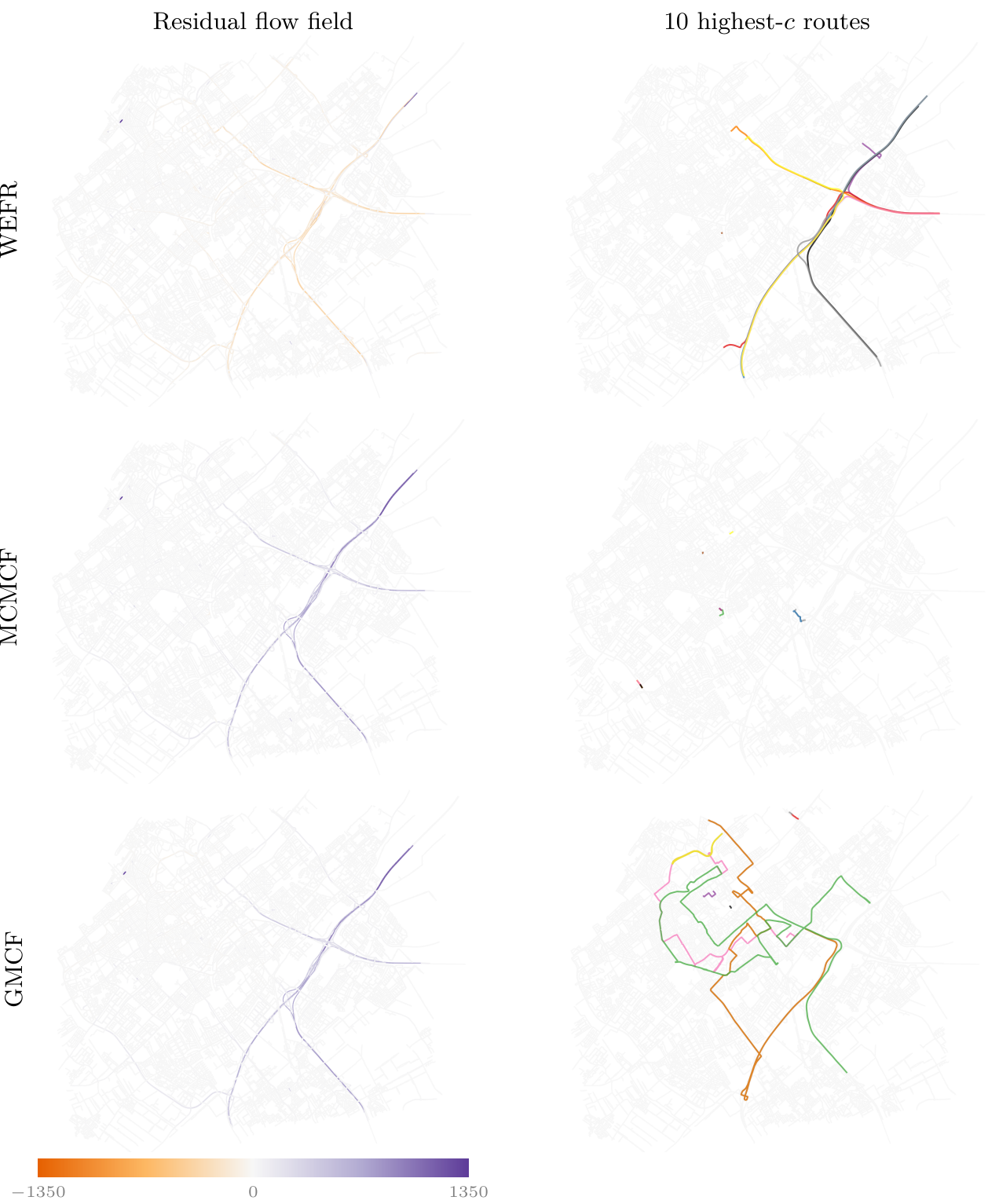}
    \caption{Residual flow field (left) and the 10 highest-coefficient reconstructed routes (right) for one of the samples of HR with $\alpha=0.05$ for each technique with $\varepsilon=100m$, $i_\text{max}=8$ and $k=2$.}
\label{fig:top10-examples}
\end{figure}

We now compare WEFR to GMCF and MCMCF. For these three methods, Fig.~\ref{fig:top10-examples} visualizes the flow deviation as well as the ten reconstructed routes with the highest coefficient $c$. Though the overall deviation of GMCF ($1.8\times10^7$) is lower than that of MCMCF ($3.7\times10^7$) and WEFR ($3.8\times10^7$), we observe that WEFR shows smaller places of flow deviation with a tendency to overrepresent the flow field, rather spreading out the deviation across the network, whereas GMCF and MCMCF have high deviation of underrepresentation concentrated along major routes with considerable traffic. 
Of further note is that the most contributing routes for WEFR look like actual routes, whereas the other methods give fairly unintuitive routes: they typically are either very short (MCMCF) or follow an unrealistic path (GMCF).

To further investigate, we look at how this comparison depends on threshold $\varepsilon$. We vary $\varepsilon$ using values $\Set{10\text{m},20\text{m},50\text{m},100\text{m},150\text{m},200\text{m},250\text{m}}$, while keeping $\alpha = 0.05$.
For WEFR we use $i_\text{max} = 8$ and $k = 2$, based on the previous section.
As before, we run each combination seven times and average the results.
Since the complexity of MCMCF and GMCF is orders of magnitude larger than that of WEFR, we analyze realism and coverage for the $2\,500$ routes with highest coefficient. We argue that this paints a better picture of similarity to the ground truth, as the many small weight paths do not give a concise description of the flow field. Moreover, this limit is still above the complexity of WEFR, thereby giving the competitors a slight advantage.

Fig.~\ref{fig:complexity-deviation-time} shows the results. We generally see the same patterns between the datasets: coverage and deviation improve as $\varepsilon$ grows. Realism deteriorates for WEFR and MCMCF, but improves for GMCF -- although it remains considerably larger than the other methods. We attribute this to the increased flexibility offered by having more starting vertices in $\RoadNet$ available. Deviation for GMCF is lower than WEFR and MCMCF.
MCMCF and WEFR behave similarly in coverage, realism and deviation. However, MCMCF has significantly higher complexity -- approaching the $O(E)$ upperbound -- whereas WEFR has higher running time for large $\varepsilon$. This suggests that the realism is somewhat inherent in the flow information, further supporting the case for complementarity of these data sources.
Finally, we note that the pattern in coverage for GMCF and for WEFR and MCMCF seem to be opposite for the two datasets. We attribute this to the need for map matching in HR which causes deviations between the flow field and the representative trajectories. This suggests that, although GMCF seems to perform well for large $\varepsilon$, this does not generalize to real data.

\begin{figure}[p]
    \centering
    \includegraphics{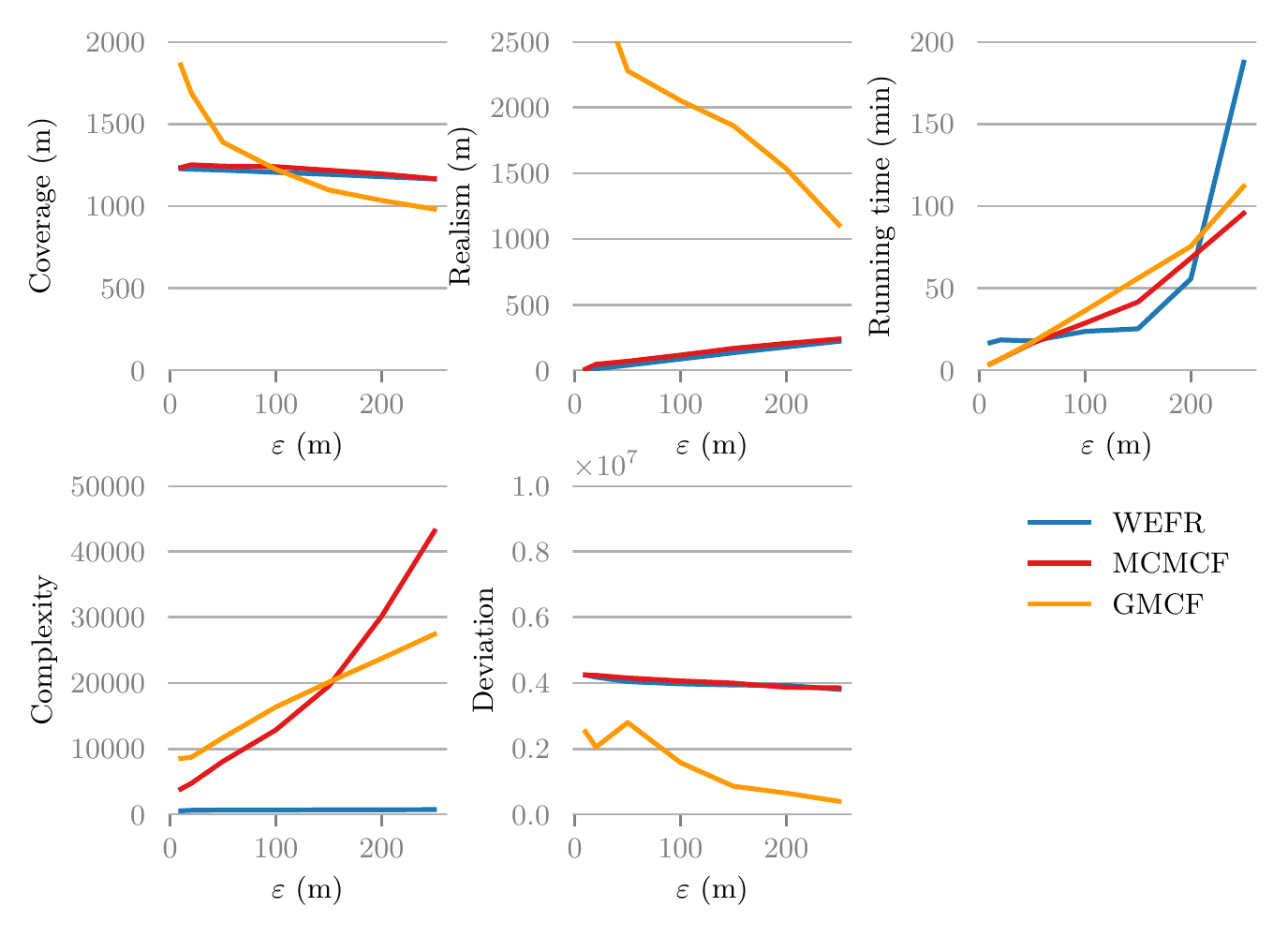}
    \hrule
    \centering
    \includegraphics{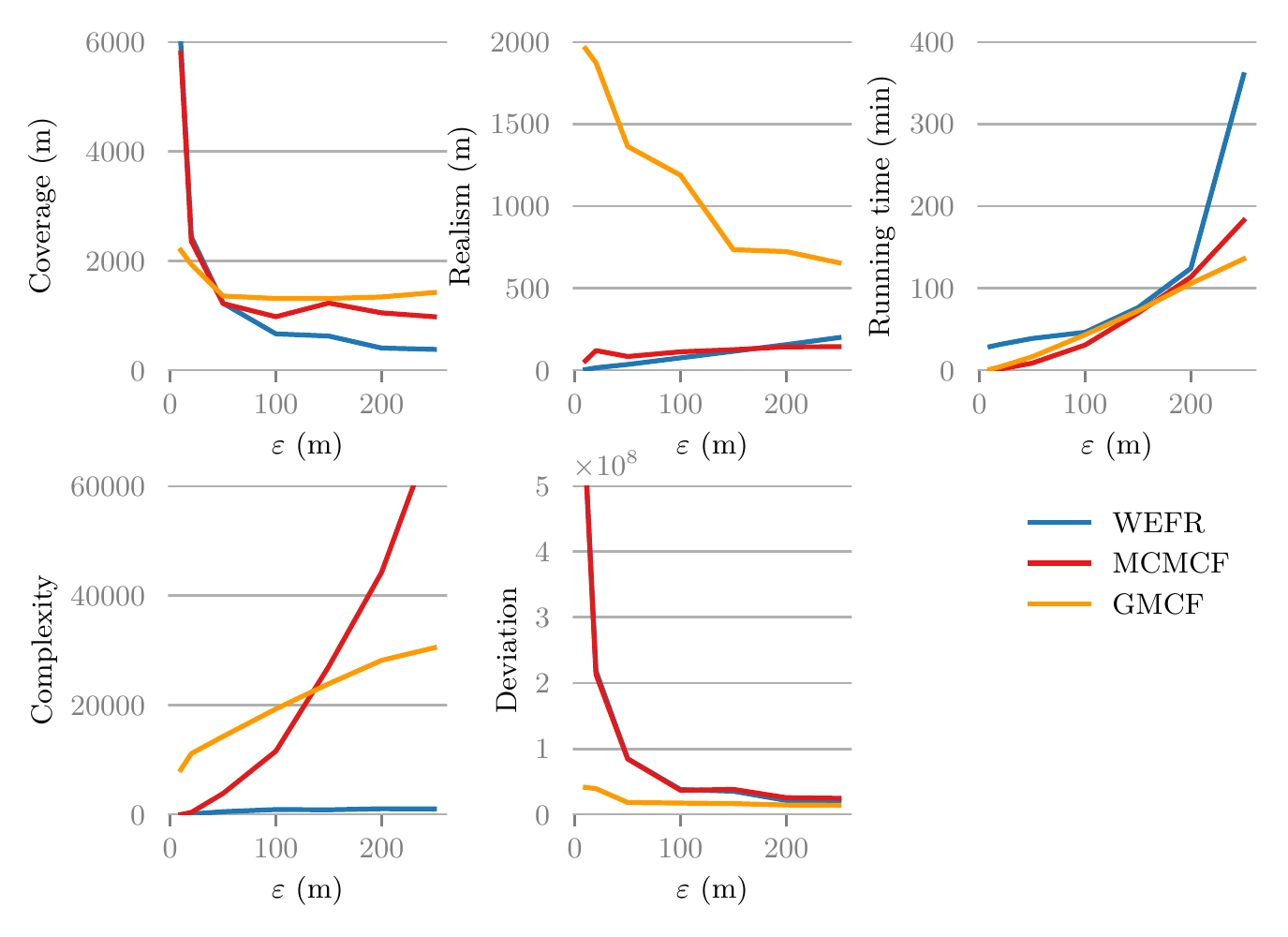}
   \caption{Results for our methods on HS (top) and HR (bottom) for varying $\varepsilon$.}
   \label{fig:complexity-deviation-time}
\end{figure}

\section{Discussion} \label{sec:discussion}
Our work shows that the studied problem is challenging even in ``simple'' forms, but our evaluation shows promising results. As such, it leads to various avenues for further research.

\subparagraph{Flow data}
We considered only flow data that is complete and static: every edge is observed and all data is aggregated into a single value. This is, however, typically not the case. Can we extend our techniques to cope well with unobserved edges and time-varying data? The challenge lies in finding good candidate routes despite the lack of flow information to guide this process, and coping with a temporal dimension that adds computational complexity. 

\subparagraph{Representatives}
Our experiments show that more representatives can improve the performance of route reconstruction. Yet, this also comes at a computational cost. However, this is mostly as it diversifies behavior: adding very similar representatives will not improve the results drastically, if at all. As our methods do not require that the representatives are actually part of the traffic generating the flow data, we could preprocess representatives using clustering and central trajectories to reduce computation time of our algorithms. Furthermore, we could use the information from such clustering methods, or from map-matching accuracy, to vary the threshold $\varepsilon$ per representative, to relate realism to the uncertainty in the data. We leave to future work to investigate how such techniques affect efficiency and quality.

\subparagraph{Reconstruction}
Though MCMCF does not guarantee a bound on the \Frechet distance, it performs similarly to WEFR in realism albeit with a very large basis. We could postprocess paths to allow only those that are realistic or attempt to incorporate this into the route-reconstruction phase, thereby reducing the complexity.
We modeled realism using the \Frechet distance. However, such paths may still traverse roads multiple times, reducing how intuitive such routes are. As strict simplicity makes the problem significantly harder, we may want to consider other models to further restrict what determines whether a route is realistic.

\bibliography{references}

\clearpage
\appendix

\section{Omitted proofs}
\label{app:hard}

\absdeviation*

\begin{proof}
We show that it is NP-hard to determine whether there exists an $(s,t)$-reconstruction with simple paths such that the absolute deviation $\Delta_\text{abs}(\Solution,c,\phi) \leq \delta$ for some $\delta \geq 0$. 

Our proof uses a reduction from the longest-path problem: given a graph $\RoadNet' = (V',E')$, a source $s'$ and sink $t'$ in $V'$, and a threshold $L > 0$, decide whether $\RoadNet'$ admits a simple path of at least $L$ edges from $s'$ to $t'$.
We turn this into an instance of our problem as follows. We augment the network to $\RoadNet$ by adding a new vertex $s$ connected by a new path of $L-1$ edges to $s'$. For the sink, we use the same vertex, $t = t'$. The flow field $\phi$ has value $1$ for all edges in $E'$, and value $0$ otherwise. Finally, we set the deviation threshold at $\delta = |E'|-1$.
Our claim is that this instance admits an $(s,t)$-reconstruction with absolute deviation at most $\delta$, if and only if $\RoadNet'$ admits a simple path of length at least $L$ from $s'$ to $t'$.

Assume that $\RoadNet'$ admits a simple path of length at least $L$ from $s'$ to $t'$. Let $P'$ denote this path, and $P$ the route in $\RoadNet$ consisting of the path from $s$ to $s'$ concatenated with $P'$. Consider the reconstruction consisting only of $P$ with coefficient $1$. The result is an $(s,t)$-reconstruction with only simple paths by construction. For every edge $e$ originally from $P'$ we have that $|\phi(e)-\sum_{P\in\Solution} M(P,e) c(P)| = 0$; for every other edge this value is $1$.
Thus, the absolute deviation $\Delta_\text{abs}(\Solution,c,\phi)$ is the number of edges along the path from $s$ to $s'$ plus the number number of edges in $\RoadNet'$ not covered by $P'$. Hence, this is at most $(L-1) + (|E'| - L) = |E'| - 1 = \delta$.

Now, assume that an $(s,t)$-reconstruction $(\Solution,c)$ exists with absolute deviation at most $\delta$. Specifically, we assume that $\Delta_\text{abs}(\Solution \setminus \{ P \},c,\phi) > \Delta_\text{abs}(\Solution,c,\phi)$ for all routes $P \in \Solution$. In other words, removing any route increases the deviation.
Observe that a reconstruction without any routes would achieve deviation $|E'| > \delta$, thus the reconstruction must contain at least one route. Consider a route $P \in \Solution$. 
As $P$ must start at $s$ and end at $t$, and the flow field at all edges between $s$ and $s'$ is zero, we know that removing $P$ from the solution locally decreases the deviation by $(L-1) \cdot c(P)$. Since removing a path must increase deviation and the deviation at one edge can increase by at most $c(P)$, $P$ must contain at least $L$ edges originating from $\RoadNet'$.
Hence, the subpath of $P$ starting at $s'$ and ending at $t'$ has at least $L$ edges.
As a result, we conclude that $\RoadNet'$ has a simple path of length at least $L$.

Finally note that the construction can easily be performed in polynomial time, so the stated problem is NP-hard. 
\end{proof}

\sqrhard*

\begin{proof}
We again use a reduction from the longest path problem. Let the instance of the longest path problem consist of a graph $\RoadNet' = (V', E')$ along with source $s'$, sink $t'$, and threshold $L$. We augment $\RoadNet'$ to obtain $\RoadNet = (V, E)$ as follows. We first add a new path of $L-1$ edges from the new source vertex $s$ to $s'$, and we refer to this set of new edges as $E_1$. Furthermore, we add a new path of $L-1$ edges from $s'$ to $t'$, and we refer to this set of edges as $E_2$. Hence we have $E = E' \cup E_1 \cup E_2$. For the sink, we use the same vertex $t = t'$. For the flow field $\phi$ we set $\phi(e) = 0$ if $e \in E_1$, $\phi(e) = 1$ if $e \in E'$, and $\phi(e) = 2$ if $e \in E_2$. We now claim that the minimum deviation for the instance formed by $\RoadNet$ and $\phi$ is $|E|$ if the longest simple path in $\RoadNet'$ has length at most $L-1$, and the minimum deviation is strictly smaller than $|E|$ if there exists a path of length at least $L$ in $\RoadNet'$. This directly implies that the stated reconstruction problem is NP-hard.

First assume that the longest simple path in $\RoadNet'$ has length at most $L-1$. Consider the $(s,t)$-reconstruction $(\Solution, c)$ consisting of a single path $P^* \in \Solution$ with $P^* = E_1 \cup E_2$ and $c(P^*) = 1$. Note that the deviation of this reconstruction is exactly $|E|$. Now consider any simple path $P$ between $s$ and $t$ and add it to $\Solution$ with $c(P) = 0$ (this does not really change the decomposition). By definition, the derivative of the deviation with respect to the coefficient $c(P)$ is given by:
\begin{align*}
\frac{\partial \Delta(\Solution,c,f)}{\partial c(P)} &=  \sum_{e\in E} -2 M(P, e) (f(e) - \sum_{P_i \in \Solution} M(P_i, e) c(P_i)) \\
&= 2 \sum_{e\in E} M(P, e) (M(P^*, e) - f(e)) \\
&= 2 (|P \cap E_1| - |P \cap E_2| - |P \cap E'|)
\end{align*}
The last step in the above is due to $M(P^*, e) - f(e)$ being $1$ for $e \in E_1$ and $-1$ otherwise.
Note that $|P \cap E_1| = L-1$ by construction. As any simple path $P$ either passes through $\RoadNet'$ or is equal to $P^*$, precisely one of the other terms is zero. If $P = P^*$, then $|P \cap E_2| = L-1$ and thus the derivative is zero.
Otherwise, $P$ passes through $\RoadNet'$ and thus $|P \cap E'| \leq L-1$. Hence, the derivative with respect to $c(P)$ is always non-negative. Since $c(P)$ cannot become smaller than zero, we conclude that $(\Solution, c)$ is a local minimum for the deviation function. Since both the constraints and the deviation function are convex, this local minimum must also be the global minimum. 

Now assume that there exists a simple path $P'$ in $\RoadNet'$ with length at least $L$. As above, let $P^* = E_1 \cup E_2$, and let $P = E_1 \cup P'$. Consider the $(s,t)$-reconstruction $(\Solution, c)$ with $\Solution = \{P^*, P\}$, $c(P^*) = 1$, and $c(P) = \epsilon$. Note that the edges in $E_2$ contribute exactly $|E_2|$ to the deviation. For an edge $e \in E_1$, the deviation is $(1 + \epsilon)^2$, and for an edge $e \in P'$ the deviation is $(1 - \epsilon)^2$. In total, the edges in $E'$ contribute at most $|E'| - L + L (1 - \epsilon)^2$ to the deviation. We obtain the following total deviation over all edges:
\begin{align*}
\Delta(\Solution,c,f) &\leq |E_2| + |E'| - L + L (1 - \epsilon)^2 + (L - 1) (1 + \epsilon)^2 \\
&= |E| - 2L + 1 + L (1 - \epsilon)^2 + (L - 1) (1 + \epsilon)^2 \\
&= |E| - 2L + 1 + 2 (L - 1) (1 + \epsilon^2) + (1 - \epsilon)^2 \\
&= |E| - 1 + 2 (L - 1) \epsilon^2 + (1 - \epsilon)^2 \\
&= |E| + (2L - 2) \epsilon^2 - 2\epsilon + \epsilon^2 \\
&= |E| + \epsilon ((2L - 1) \epsilon - 2)
\end{align*}
Finally, by choosing $\epsilon = 1/(2L-1)$, we obtain that $\Delta(\Solution,c,f) = |E| - 1/(2L-1) < |E|$, which concludes the proof. 
\end{proof}

\section{\Frechet map-matching}
\label{app:map-matching}
Alt \etal.~\cite{alt2003matching} present an algorithm to map-match a trajectory $T=\Seq{p_1,\ldots,p_{l}}$ to a road network $\RoadNet$: the result is a route in $\RoadNet$ such that its \Frechet distance to $T$ is at most $\varepsilon$. This decision version can subsequently be used to find the route that minimizes the \Frechet distance to $T$, but we need only this decision algorithm.

Let the allowed \Frechet distance be fixed to $\varepsilon$, and let $T$ be parameterized on $[1,l]$ such that for $\tau \in [i,i+1]$, $T(\tau) = p_i + (\tau-i) (p_{i+1}-p_i)$. Furthermore, we assume for simplicity here that $\RoadNet$ is the subset of the overall network that is fully covered by the Minkowski sum of a disk of radius $\varepsilon$ with the trajectory $T$.

Trajectory $T$, network $\RoadNet$ and $\varepsilon$ define a free-space manifold $\mathcal{F}$. This manifold defined as the locations in the direct product space $(\tau,r) \in T \times \RoadNet$ such that $d(T(\tau),r) \leq \varepsilon$.
A monotone path through the free space on this free-space manifold from some point with $\tau = 1$ to some point with $\tau = l$ then matches to a route in $\RoadNet$ with \Frechet distance at most $\varepsilon$. Note that the monotonicity avoids moving backwards along an edge, but an edge may be visited multiple times.

First, for each vertex $v$ of $\RoadNet$, the algorithm computes a 1-dimensional free-space diagram $FD(v) \colon [1,l] \rightarrow \Set{0,1}$. In this diagram, intervals that map to value $1$ are called \emph{white intervals}. The white intervals mark the parameter values of $\tau$ on $T$ for which $v$ is within  Euclidean distance $\varepsilon$ and thus a potential match.

Then, for each edge $(u,v)$ in $\RoadNet$, the algorithm computes the left-right pointers for all white intervals $I$ of $FD(u)$. These left-right pointers mark a range of $FD(v)$ that can be reached from $I$ by a monotone path in the free space of the connecting 2-dimensional free-space strip defined by $(u,v)$ and $T$. Due to convexity any point in the free space of $FD(v)$ between these extremal left-right pointers can be reached, but note that this range may encompass multiple white intervals at $v$.

To determine whether a route exists within \Frechet distance $\varepsilon$, the algorithm now applies a sweep-line algorithm over the parameter space defined by $\tau$, starting at 1. For every vertex $v$, it maintains a list $C(v)$ of ranges in $FD(v)$ that can be reached by a monotone path in the free-space manifold defined by $\RoadNet$ and $T$. That is, the intersection of $C(v)$ with the white intervals of $FD(v)$ define reachable intervals. In particular, this list contains the ranges such that the last part of the matching path ends at or goes through the sweep-line value in the dimension of $\tau$. 

The events that the sweep line handles are when the start of a reachable white interval $I$ in some $C(u)$ is reached that was previously not discovered yet. This event is handled by updating all $C(v)$ for all $v$ such that $(u,v)$ in $E$, taking into account the new intervals that can be reached from $I$. These can be efficiently retrieved via the precomputed left-right pointers.

The sweep line stops as soon as a white interval is added to some $C(v)$ that contains parameter value $l$ or no white intervals are available anymore. The former implies the existence of a route within \Frechet distance $\varepsilon$, whereas the latter implies that such a route does not exist. 
To reconstruct the route, the algorithm keeps track of predecessor vertices whenever the white intervals in $C(v)$ are updated when handling a white interval $I$.
In total, the algorithm runs in $O(l (V+E))$ time, assuming efficient bookkeeping. 
\end{document}